
\documentclass{llncs}
\usepackage{amsmath}
\usepackage{amssymb}
\usepackage{enumerate}
\usepackage{varioref}
\usepackage{charter,eulervm}

\title{{\sc A Note on Probe Cographs}}

\author{
 Ton~Kloks 
}
\institute{
 Department of Computer Science\\
 National Tsing Hua University\\ 
 Taiwan
}

\pagestyle{plain}
\begin{document}

\maketitle

\begin{abstract}
Let $G$ be a graph and let 
$N_1,\dots,N_k$ be $k$ independent sets in $G$. 
The graph $G$ is a $k$-probe cograph if $G$ can be embedded 
into a cograph by adding edges between vertices that are 
contained in the same independent set. 
We show that there exists an $O(k \cdot n^5)$   
algorithm to check if 
a graph $G$ is a $k$-probe cograph. 
\end{abstract}

\section{Introduction}

\begin{definition}
A {\em decomposition tree\/} of a graph   
$G$ is a pair $(T,f)$ 
where $T$ is a ternary tree and where $f$ is a bijection 
from the leaves of $T$ to the vertices of $G$. 
\end{definition}

\begin{definition}
A graph is a cograph if it has no induced $P_4$. 
\end{definition}

The class of cographs is exactly the class of graphs which contains 
the one-vertex graph and which is closed under complementation 
and disjoint union. That is, a graph $G$ is a cograph if 
and only if every induced subgraph $H$ of $G$ with 
at least two vertices is either disconnected 
or its complement $\Bar{H}$ is disconnected. 

There are various 
other characterizations. For example, a graph 
$G=(V,E)$ is a cograph if and only if 
every induced subgraph $G^{\prime}=(V^{\prime},E^{\prime})$ 
with at least two vertices 
has a twin, that is, it has two vertices $x$ and $y$ that have the same 
neighbors in $V^{\prime} \setminus \{x,y\}$. 

\bigskip 

Cographs have a decomposition tree which is called a cotree. 
This is a rooted binary tree in which each internal node is 
labeled with $\otimes$ or $\oplus$. The $\otimes$-operator 
connects every vertex in the left subtree with every vertex in 
the right subtree. The $\oplus$-operator unions the two subgraphs 
induced by the left- and right subtree. 

\medskip 

Every edge in the cotree induces a partition of the vertices 
in two parts, say $A$ and $B$, which are the sets of vertices 
that are mapped to the leaves of the two subtrees that 
are separated by the edge. 
The submatrix of the adjacency matrix with rows indexed by the 
vertices of $A$ and the columns indexed by the vertices of 
$B$ has the form 
\begin{equation}
\label{eqn}
\begin{pmatrix} J & 0 \end{pmatrix}
\end{equation}
or the transpose of this. Here $J$ is the all-ones matrix. 
This follows from the fact that the vertices in 
every rooted subtree form a module. 
Cographs can be recognized in linear time. The recognition 
algorithm builds a cotree~\cite{kn:corneil}. 

\bigskip 

Let $G=(V,E)$ be a cograph and let $N_1,\dots,N_k$ be 
$k$ subsets of vertices, not necessarily disjoint. 
Remove all edges $\{x,y\} \in E$ for which there is a subset $N_i$ 
that contains both $x$ and $y$. We call the graphs that are obtained in 
this manner $k$-probe cographs. For the recognition 
problem of $k$-probe cographs we refer to the 
labeled case when sets $N_i$ are a part of the input. 
In that case each vertex has a label which is a 
$0/1$-vector of length $k$ with a 1 in position $i$ 
if the vertex is in $N_i$. 
By Kruskal's theorem~\cite{kn:kruskal}   
$k$-probe cographs are characterized by a finite collection 
of forbidden induced subgraphs (either labeled or unlabeled).   
(It follows that in $k$-probe cographs all induced paths have 
a length  
which is bounded by a function of $k$.) 

\medskip 

By Equation~(\ref{eqn}) $k$-probe cographs have rankwidth 
$k$, since the adjacencies of every vertex across 
a line in the cotree is characterized by its label, which is 
a vector of length $k$. The recognition of  
(labeled or unlabeled) $k$-probe cographs can be expressed 
in monadic second-order logic and it follows that recognizing  
$k$-probe cographs is fixed-parameter 
tractable (see~\cite{kn:hlineny}). 

\medskip 

In this note we show that there is an efficient   
recognition algorithm to recognize labeled $k$-probe cographs. 

\section{Recognition of $k$-probe cographs}

\begin{theorem}
There exists an $O(k \cdot n^5)$ algorithm 
for the recognition of labeled $k$-probe cographs. 
\end{theorem}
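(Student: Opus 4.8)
The plan is to recast the recognition problem as a cograph \emph{sandwich} problem and to solve it by a deterministic top-down decomposition that mimics the construction of a cotree. First I would observe that, since each $N_i$ is an independent set of $G$, the set of edges we are allowed to add is completely determined: a non-edge $\{x,y\}$ is \emph{fillable} precisely when $x$ and $y$ share a label, i.e.\ when there is an $i$ with $x,y\in N_i$. Writing $G^{+}=(V,E^{+})$ for $G$ together with all fillable edges, a graph $H$ witnesses that $G$ is a labeled $k$-probe cograph if and only if $H$ is a cograph with $E\subseteq E(H)\subseteq E^{+}$. Thus recognition is exactly the cograph sandwich problem for the pair $(G,G^{+})$. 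Because each label is a $0/1$-vector of length $k$, deciding whether one pair is fillable costs $O(k)$, so I would precompute the entire fillable relation once in $O(k\cdot n^{2})$ time.

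Next I would solve the sandwich problem recursively on an induced vertex set $S$, the top call being $S=V$. If $|S|=1$ the answer is \emph{yes}. If $G[S]$ is disconnected, I decompose along the connected components of $G[S]$ and recurse on each (an $\oplus$-node of the target cotree). Otherwise, if the \emph{forbidden graph} $\overline{G^{+}}[S]$, whose edges are the non-edges of $G$ that are not fillable, is disconnected, I decompose along its components, recurse on each, and join the parts completely (a $\otimes$-node). If $|S|\ge 2$ and both $G[S]$ and $\overline{G^{+}}[S]$ are connected, I report failure. The final answer is the conjunction over all recursive calls.

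The heart of the argument, and the step I expect to be the main obstacle, is to show that this recursion is both \emph{correct} and \emph{deterministic}, so that no backtracking over alternative bipartitions is needed. I would prove two lemmas. The hard part will be the \emph{restrict-and-reunite} lemma: if $H$ is any cograph sandwich for $G[S]$ and $G[S]$ is disconnected, then restricting $H$ to each component of $G[S]$ and taking the disjoint union is again a cograph sandwich, so decomposing along $G[S]$-components loses nothing; the dual statement (via complementation) justifies decomposing along the components of $\overline{G^{+}}[S]$ with a complete join, and shows that in both cases the \emph{finest} partition may always be used. The second is a dichotomy lemma: if $|S|\ge 2$ and a cograph sandwich $H[S]$ exists, then $H[S]$ is disconnected or $\overline{H[S]}$ is disconnected, and since $G[S]\subseteq H[S]\subseteq G^{+}[S]$ the first case forces $G[S]$ to be disconnected while the second forces $\overline{G^{+}}[S]$ to be disconnected. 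Together these guarantee that at each node exactly one operation is available and the component partition is canonical, so the recursion is forced and builds a decomposition tree with $O(n)$ internal nodes, exactly as in linear-time cograph recognition.

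Finally I would bound the running time. At a node with $|S|=s$ the work is dominated by testing connectivity of $G[S]$ and of $\overline{G^{+}}[S]$ and reading off their components, which is $O(s^{2})$ once the fillable relation is available. Summing $s^{2}$ over the $O(n)$ nodes of the decomposition tree gives $O(n^{3})$, and together with the one-time $O(k\cdot n^{2})$ preprocessing this yields a total of $O(k\cdot n^{2}+n^{3})$, comfortably within the claimed $O(k\cdot n^{5})$ bound. Apart from the determinism argument above, every remaining step is routine bookkeeping.
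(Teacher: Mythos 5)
Your proof is correct, but it takes a genuinely different route from the paper. The paper works \emph{bottom-up}: it defines label-aware modules (a vertex outside $X$ sees either no vertex of $X$, or every vertex of $X$ whose label is orthogonal to its own) and repeatedly merges pairs of modules that form twins, growing the cotree from its leaves; one twin test costs $O(k\cdot n^2)$, one twin search costs $O(k\cdot n^4)$, and $O(n)$ merges give the stated $O(k\cdot n^5)$. You instead reformulate recognition as the cograph sandwich problem for the pair $(G,G^{+})$ and run a \emph{top-down} decomposition (this is essentially the Golumbic--Kaplan--Shamir algorithm for cograph sandwiches): recurse on the components of $G[S]$, otherwise on the components of $\overline{G^{+}}[S]$, otherwise reject. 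Your two lemmas are exactly what is needed and both hold by the spanning-subgraph arguments you sketch: if a sandwich $H$ on $S$ is disconnected then so is its spanning subgraph $G[S]$, and if $\overline{H}$ is disconnected then so is its spanning subgraph $\overline{G^{+}}[S]$; conversely disjoint unions and joins of sandwiches on the parts are sandwiches on $S$. The payoff of your approach is efficiency: $O(k\cdot n^{2}+n^{3})$ comfortably beats the paper's $O(k\cdot n^{5})$, while still proving the stated theorem; the paper's twin-merging, on the other hand, constructs the embedding in a way that directly parallels the classical twin characterization of cographs. One small imprecision in your write-up: it is not true that ``exactly one operation is available'' at each node, since $G[S]$ and $\overline{G^{+}}[S]$ can both be disconnected (e.g.\ two nonadjacent vertices sharing a label). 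This is harmless: your restrict-and-reunite lemma is an equivalence for either decomposition, so fixing any priority order yields a correct deterministic algorithm.
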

\begin{proof}
Let $G=(V,E)$ be a labeled graph, that is, every vertex 
$x$ has a label which is a $0/1$-vector of length $k$. 
For $i \in \{1,\dots,k\}$, 
let $N_i$ be the set of vertices that have a 1 in position 
$i$ of their label. 
Then the graph induced by $N_i$ is an independent set 
in $G$. 
The algorithm that we describe below builds a cotree 
for an embedding of $G$ or it concludes that $G$ is not a 
labeled $k$-probe cograph. 

\medskip 

\noindent
Let $X \subseteq V$ be a subset of vertices. Call the set $X$  
a module if every vertex $z \in V \setminus X$ is either 
\begin{enumerate}[\rm (1)]
\item not adjacent to any vertex of $X$, or 
\item $z$ is adjacent to 
 all vertices $x \in X$ of which the label is orthogonal 
to the label of $z$.\footnote{Here, we say that two vectors are 
orthogonal if they don't have a 1 in any common entry. 
In particular, the $\mathbf{0}$-vector is orthogonal to every 
other vector.}   
\end{enumerate}

\medskip 

\noindent
Let $X$ and $Y$ be two disjoint modules. 
Call $X$ and $Y$ twins if 
\begin{enumerate}[\rm (a)]
\item 
\label{case a}
either no vertex of $X$ is adjacent to any vertex of $Y$ 
or every vertex $x \in X$ is adjacent to every vertex $y\in Y$ 
which has a label that is orthogonal to the label of $x$, and 
\item 
\label{case b}
$X \cup Y$ is a module. 
\end{enumerate}

\medskip 

\noindent
Let $X$ and $Y$ be twins. Notice that there is a 
cotree embedding with $X \cup Y$ as a rooted subtree if and only if 
there is a cotree embedding with one of $X$ and $Y$ as a subtree. 

\medskip 

\noindent
The algorithm builds a cotree as follows. Starting with subtrees 
$X$ which consist of one vertex, it grows the subtrees by looking 
for twins. If $\{X,Y\}$ is a twin, the subtrees 
of $X$ and $Y$ are replaced by the subtree for $X \cup Y$. If no 
vertex of $X$ is adjacent to any vertex of $Y$ then the root 
of the subtree for $X \cup Y$ is labeled by $\oplus$, and 
otherwise it is labeled by $\otimes$. 

\medskip 

\noindent
At each stage there is a collection of $O(n)$ feasible 
subtrees. To look for a twin, the algorithm tries all 
pairs. To check if two subtrees $X$ and $Y$ form a twin 
the algorithm checks if either 
no vertex of $X$ is adjacent to any vertex of $Y$, or 
if every vertex $x \in X$ is adjacent to those vertices 
$y \in Y$ of which the label is orthogonal to the label of 
$x$. Furthermore, the algorithm checks 
if $X \cup Y$ is a module. 
Adjacencies are checked in constant time by using the 
adjacency matrix of $G$. To check if 
the labels of two vertices  
are orthogonal takes $O(k)$ time. Thus it can be checked in 
$O(k \cdot n^2)$ 
time if two modules $X$ and $Y$ are twins. 
It follows that within $O(k \cdot n^4)$ 
time either a twin is found or the conclusion is drawn that 
$G$ is not a $k$-probe cograph. Since the cotree has 
$O(n)$ nodes, it follows that a cotree embedding is built in 
$O(k \cdot n^5)$ time, if it exists. 
\qed\end{proof}

\bigskip 

\begin{remark}
For the unlabeled case, the recognition of 
$k$-probe cographs is NP-complete. As noted  
above, the recognition problem is fixed-parameter tractable. 
\end{remark}

\end{document}